\journalname{Journal of Autonomous Agents and Multi-Agent Systems}
\begin{document}

\title{The Authorship Dilemma: Alphabetical or Contribution?}



\author{Margareta Ackerman 
\and
Simina Br\^anzei
}


\institute{Margareta Ackerman \at 
		San Jose State University\\
		Computer Science Department\\
		San Jose, CA\\
             \email{\texttt{margareta.ackerman@sjsu.edu}}           
           \and
             Simina Br\^anzei \at
		Hebrew University of Jerusalem\\
		Rachel and Selim Benin School of Computer Science and Engineering\\
		Jerusalem, Israel\\
		\email{\texttt{simina.branzei@gmail.com}}
}

\date{Received: date / Accepted: date}

\maketitle

\begin{abstract}
Scientific communities have adopted different conventions for ordering authors on publications. Are these choices inconsequential, or do they have  significant influence on individual authors, the quality of the projects completed, and research communities at large? What are the trade-offs of using one convention over another? 
In order to investigate these questions, we formulate a basic two-player game theoretic model, which already illustrates interesting phenomena that can occur in more realistic settings.

 We find that alphabetical ordering can improve research quality, while contribution-based ordering leads to a denser collaboration network and a greater number of publications.
Contrary to the assumption that free riding is a weakness of the alphabetical ordering scheme, this phenomenon can occur under any contribution scheme, and the worst case occurs under contribution-based ordering.
Finally, we show how authors working on multiple projects can cooperate to attain optimal research quality and eliminate free riding given either contribution scheme.

\keywords{game theory \and academic collaboration \and credit allocation \and resource allocation \and coalitional games}
\end{abstract}

\section{Introduction}

Resource allocation is a central problem in artificial intelligence and more generally,
economic activity is fundamentally about resource allocation.
Without loss of generality,
every decision -- and thus every computation -- can be viewed as a resource allocation
instance (Wellman \cite{Wellman:EconAI}).
Research has been recognized as an economic
activity crucial for the long term well-being of society, and as a result, developed countries
allocate a significant percentage of their resources to basic and applied research activities.
The credit allocation schemes should incentivize scientific communities to operate at their best; however,
decisions regarding the distribution of research resources are sometimes made in an ad-hoc manner, with little theoretical or empirical justification of their long-term performance. In this paper, we investigate one of the core problems in this domain, namely, the allocation of credit for scientific work.

%
%
 
The allocation of scientific credit influences funding decisions, as well as tenure, promotions, and awards. Given the critical role that credit allocation plays in academia, it is surprising how little is known of the effects of name-ordering conventions. What influence do ordering schemes have on individual authors? Do they have any global effect on the research communities where they are applied and the kinds of projects completed therein? What can authors do to overcome the limitations imposed by credit allocation schemes? 

Our goal in this paper is to provide a framework for studying the effects of author ordering schemes and address these questions.
While many ordering conventions are possible, the prominent ones in academic communities
are to list authors \emph{by contribution}, that is, in descending order of their contribution to the paper, or \emph{alphabetically}, in lexicographical order of their last names. Studying the impact of these conventions will allow research communities to make informed decisions on whether to apply contribution or alphabetical author ordering schemes. 

Lake \cite{Lake} shows that listing authors alphabetically gives rise to the \emph{Matthew Effect}, whereby readers are likely to assume that the more established authors deserve more credit. 
Furthermore, alphabetical ordering can give an unfair benefit to those whose last names start with letters that occur earlier in the alphabet \cite{Efthyvoulou,Einav,Praag}. Tenure at highly ranked schools, fellowship, and to some extent even Nobel Prize winnings are correlated with surname initials~\cite{Einav}.
Indeed, the American Psychological Association~\cite{APA} mandates ordering authors by their contribution: 
\begin{quote}
\emph{``name of the principal contributor should appear first, with subsequent names in order of decreasing contribution.''}
\end{quote}

However, major disciplines such as mathematics, theoretical computer science, theoretical physics, and theoretical areas of economics have a long tradition of relying on alphabetical ordering. An empirical pattern about the attitude of researchers working in these areas is that authors start collaborations trusting others will do their best. Even if the contribution does not turn out proportional in the end, the authors are given equal credit, which is signaled through lexicographic ordering.
The American Mathematical Society~\cite{AMS} states:
\begin{quote}
\emph{``Determining which person contributed which ideas is often meaningless because the ideas grow from complex discussions among all partners... mathematicians traditionally list authors on joint papers in alphabetical order.''}
\end{quote}

More strikingly, there exist studies indicating that alphabetical ordering can result in improved research quality.
Laband and Tollison \cite{Laband} found that alphabetical ordering is positively correlated with research quality in economics research, where alphabetized papers are more highly cited than non-alphabetized papers. For example, the data from the American Economic Review---one of the big five journals in economics---shows that in the five years after publication, alphabetized articles are cited 50.16 times, while non-alphabetized articles are cited 30.38 times. From the means of the data in \cite{Laband}, this implies that alphabetized articles are cited 65\% more often than non-alphabetized articles. Brown, Chan, and Lai \cite{Brown} observe a similar trend in the marketing literature.

Joseph, Laband, and Patil \cite{Joseph} provide an extensive discussion of several studies on academic authorship, as well as synthetic data, by setting up a stochastic model of author ordering. \cite{Joseph} finds that the tendency towards alphabetical author ordering increases as acceptance rates decrease, and that for a fixed acceptance rate, papers whose authors are listed alphabetically tend to be of higher quality. 

While no scheme is optimal, it is important to first understand the strengths and weaknesses of each, such that each research community can implement the one most useful for its types of collaborations. Our investigation addresses projects completed by one or two parties. This is consistent with the rest of the literature on collaboration models and network formation games (where an edge exists between two researchers if they worked on a project together). One or two-authored papers represent a substantial fraction of the literature in some fields, including mathematics, physics, and economics. Laband and Tollison~\cite{Laband} show that, based on the number of citations, two authors appear to represent the optimal team size in economics. Newman~\cite{newman2004best} finds that:

\emph{``[...] purely theoretical papers appear to be typically the work of two scientists, with high-energy theory and computer science showing averages of 1.99 and 2.22."}

We collected data from the proceedings of major artificial intelligence conferences over the years 2013--2015 and found that approximately 30\% of the published papers are completed by one or two researchers (see Table 1). These numbers reinforce the 
study of one and two-authored papers as an important first step towards understanding the impact of credit allocation schemes.

\subsection{Our Contributions}

The main take-away message of our paper is the existence of counterintuitive effects due to author ordering schemes; crucially, quality is related to the effort the scheme can elicit from the participants. 
We make this point concrete by formulating
a game theoretic model of collaboration, which allows the investigation of credit allocation schemes and illustrates several important phenomena. 

Our formulation offers a compelling explanation for the phenomenon that alphabetical ordering can lead to improved research quality in some communities. In particular,
alphabetical ordering encourages collaborators to match each other's efforts:
 when one of the authors invests a lot of effort into a project, it is a best response for the co-author(s) to also invest high effort, since matching efforts leads to higher utility.

On the other hand, we find that contribution-based ordering can result in the completion of more research projects and a denser social network. The latter phenomenon has been observed through empirical data analysis by Newman \cite{Newman}.

When it comes to free riding, at first sight, the issue appears particularly problematic for alphabetical ordering. However, upon further examination, it becomes apparent that both schemes are subject to some degree of free riding and we show the worst case occurs under non-alphabetical ordering.

As with every theoretical study of social behavior, there is a question of what properties observed in theory apply in practice. We argue that our model offers a simple and intuitive explanation for several important phenomena observed empirically; moreover, it makes predictions that can be verified experimentally.
The study of extended models, with additional properties such as diminishing returns, larger coalitions, divisible budgets, different distributions of individual effort, and dynamic aspects of social networks
remains an important subject for future work.

\begin{table}
  \begin{tabular}{|| l | c | c | c | c | c | c | c | c | c ||}
\hline \hline
Conference & 1 & 2 & 3 & 4 & 5 & 6 & 7 & 8 & 9 \\ \hline
IJCAI 15&       5,96    & 34,86 & 29,81 & 17,88 & 6,42  & 2,75  & & &  \\ \hline
IJCAI 13&       5,69    & 29,27 & 30,31 & 20,46 & 10,10 & 3,36  & & &  \\ \hline
AAAI 15 & 7,51  & 25,26 & 25,05 & 22,96 & 12,52 & 4,17  & 1,67  & 1,04  &  \\ \hline
AAAI 14 & 3,34  & 23,62 & 30,78 & 23,62 & 10,97 & 5,72  & 1,19  & 0,71  &  \\ \hline
AAAI 13 & 5,44  & 29,20 & 34,65 & 14,35 & 8,41  & 5,94  & 1,48  & 0,49&  \\ \hline
COLT 15 & 10& 28,57     & 34,28 & 22,85 & 4,28  & & & &  \\ \hline
COLT 14 & 7,69  & 34,61 & 40,38 & 11,53 & 5,76  & & & &  \\ \hline
COLT 13 & 19,14 & 29,78 & 31,91 & 10,63 & 6,38  & 2,12  & & &  \\ \hline
NIPS 15 & 2,97  & 28,28 & 34,49 & 21,33 & 8,43  & 2,97  & 1,24  & 0,24  &  \\ \hline
NIPS 14 & 2,94  & 28,67 & 35,04 & 22,54 & 6,12  & 3,18  & 0,98  & 0,49  &  \\ \hline
NIPS 13 & 5,11  & 28,40 & 32,95 & 23,29 & 8,23  & 0,85  & 0,85& &  \\ \hline
ICML 15 & 1,85  & 30    & 33,33 & 22,22 & 9,25  & 1,85  & 0,37  & 0,74  & 0,37  \\ \hline
ICML 14 & 2,58  & 30,96 & 35,48 & 21,61 & 7,09  & 1,61  & 0,32  & 0,32  & \\ \hline
ICML 13 & 6,00  & 26,50 & 33,21 & 23,67 & 8,48  & 1,76  & 0,35 & & \\ \hline
UAI 15  & 1,01  & 51,51 & 26,26 & 12,12 & 7,07  & 1,01 & & & \\ \hline
UAI 14  & 8,33  & 34,37 & 30,20 & 19,79 & 5,20  & 2,08 & & & \\ \hline
UAI 13  & 2,73  & 39,72 & 27,39 & 23,28 & 6,84  & & & &  \\ \hline
AAMAS 15 &      0,97    & 28,57 & 37,01 & 18,18 & 10,38 & 3,57  & 1,94 & & \\ \hline
AAMAS 14 & 5,67 & 26,81 & 35,96 & 19,24 & 8,83  & 2,83  & 0,94 & & \\ \hline
AAMAS 13 & 8,26 & 29,52 & 31,49 & 20,07 & 7,87  & 2,75  & 1,18  & 0,39 &  \\ \hline \hline
\end{tabular}
\caption{Major conferences in artificial intelligence with percentage of papers written by $1, 2, \cdots, 9$ authors, respectively, during the years 2013--2015. Approximately 30\% of the published papers are completed by at most two authors.}
\end{table}

\subsection{Related Work}

Kleinberg and Oren \cite{Kleinberg_STOC} investigated a related question: Why do some academic communities over-emphasize the
importance of highly technical problems?
The authors use a noncooperative model for the allocation of scientific credit and their main finding is
that research communities may have to over-reward their key scientific challenges, to ensure that such problems are solved in a Nash
equilibrium.

The academic game is a non-transferable utility game with overlapping coalitions and is related to several types of coalitional games, such as threshold task games (\cite{Elkind}) and coalitional skill games (Bachrach and Rosenchein \cite{Bachrach}).

There are several co-authorship models in the economics and computer science literature.
de Clippel, Moulin, and Tideman \cite{Moulin} study the division of a homogeneous divisible good
when every agent reports an evaluation of the others' contribution, and establish the existence of a unique impartial and
consensual mechanism for three agents.
Jackson and Wolinsky \cite{Jackson_Wolinsky} introduce a co-author model with network externalities, where each agent has a unit
of time and can divide it among different collaborations, and study the structure of the networks in the equilibrium.
Anshelevich and Hoefer \cite{Hoefer} study the price of anarchy for contribution games on networks with concave and convex reward
functions.

\vspace{5mm}

\section{Model}
The \emph{academic game} studied here is a collaboration model defined as a network formation game. 
This simple, yet expressive formulation captures the fundamental aspect of collaboration -- namely that multiple
individuals can do more than one -- and enables the investigation of network effects due to name ordering schemes.

Let $N = \{1, \cdots, n\}$ be a set of agents. Each agent $i$ has a budget of weight $w_i$, consisting of a set of coins
$C_i = \{c_{i,1}, \cdots, c_{i, n_i}\}$. Every coin $c_{i,j}$ has a positive weight
$w_{i,j}$, and $\sum_{j=1}^{n} w_{i,j} = w_{i}$, $\forall i \in N$. The agents can work alone or in pairs to solve
different projects.
A project of weight $w$ can be solved either by:
\begin{itemize}
\item one agent who invests a coin of weight $w$ to the project, or
\item two agents, each of which contributes with one coin, such that the sum of the two coins is $w$.
\end{itemize}
An agent can participate in multiple projects simultaneously by investing a different coin in each project.
The same pair of agents can solve multiple projects together, and each coin can only be used once.

\subsection{Reward Function}

We study games with a very general class of rewards, namely convex homogenous functions. Homogeneous valuations are often used in economic theory and have been widely studied in resource allocation domains (for example, in the setting of multiple goods, the class of homogeneous valuations contains well-studied instances such as additive linear and Leontief, to constant elasticity of substitution\cite{CWG95,CGG13}). In our model, convexity translates to the property that greater effort leads to greater reward; the rate at which the increase is observed varies depending on the degree of the homogeneous function.

Formally, a function $\mathcal{F}$ is \emph{homogeneous} if there exists a positive degree $d$ such that $\mathcal{F}(t \cdot x) = t^d \cdot \mathcal{F}(x)$, for every $x$ and $t > 0$. Since we study a one-dimensional setting with convex reward, we have: $\mathcal{F}(x) =  x^d$, where $d > 1$.
We assume that for every $w \in \mathbb{R}$, there exists a project of that weight that the agent could (in principle) solve.

Finally, in each academic community there is a general perception of the significance of being the first or the second author on a paper.
Without prior knowledge about the specific paper or its authors, the relative contribution of each author on a two-authored paper
can be represented by a fixed \emph{contribution vector} $[\phi, 1-\phi]$, where $\frac{1}{2} \leq \phi < 1$. That is, the community assumes that
the contribution of the first and the second author are $\phi$\ and $(1-\phi)$\, respectively (these values can be seen as percentages of the total worth of the project). Throughout the paper, we refer to the case where $\phi = \frac{1}{2} = 1 - \phi$ as \emph{alphabetical ordering}, and to the cases where $\phi > \frac{1}{2} > 1 - \phi$ as \emph{contribution-based ordering}. Parameter $\phi$ is fixed and known to the players.  

\subsection{Coalition Structures}

Next, we define coalition structures and utility in academic games.

\begin{definition}
Given an academic game, a \emph{coalition structure} is a partition of the set of all coins, such that every coin $c_{i,j}$ of agent $i$
is either a singleton project, or is paired with a coin $c_{k,l}$ belonging to another agent $k \in N \setminus \{i\}$.
\end{definition}

\begin{definition}
Given an academic game and a coalition structure $CS$,
let $CS_i$ be the set of projects that agent $i$ contributes to, $\forall i \in N$.
The utility of $i$ is:
\[
u_i(CS) = \sum_{P_{j} \in CS_i} v_i(P_{j}),
\]
where $\{P_1, \cdots, P_m\}$ is the set of projects solved under $CS$,
$w(P_j)$ is the weight of project $P_j$, and
\[
v_i(P_{j}) =
\left\{
	\begin{array}{ll}
	w(P_j)^d & \mbox{if $i$ completes $P_j$ alone} \\
	\phi \cdot w(P_j)^d & \mbox{if $i$ is first author on $P_j$} \\
	(1 - \phi) \cdot w(P_j)^d & \mbox{if $i$ is second author on $P_j$}
	\end{array}
\right.
\]
\end{definition}

Next we give an example of an academic game. \\

\begin{example}
Consider an academic game with two agents,
where $d= 2$, agent $1$ has the set of coins $C_1 = \{c_{1,1}, c_{1,2}\}$,
agent $2$ has the set $C_2 = \{c_{2,1}\}$, and the weights of the coins are:
$w_{1,1} = 3, w_{1,2} = 1$, $w_{2,1} = 2$.
The possible coalition structures are: $CS_1 = (\{c_{1,1} \}, \{c_{1,2}\}, \{c_{2,1}\})$,
$CS_2 = (\{c_{1,1}, c_{2,1}\}, \{c_{1,2}\})$, and $CS_3 = (\{c_{2,1}, c_{1,2}\}, \{c_{1,1}\})$, where for each
project, the coins are listed by decreasing size.
The utilities of the agents are as follows:

\begin{table}[!ht]
\normalsize
  \begin{tabular}{ l  l}
$CS_1:$ & $u_1(CS_1) =  w_{1,1}^d +  w_{1,2}^d = 3^2 + 1^2 = 10$ \\
& $u_2(CS_1) = w_{2,1}^d = 2^2 = 4$.\\
& \\
$CS_2$: & $u_1(CS_2) = \phi \ \cdot (w_{1,1} + w_{2,1})^d +  w_{1,2}^d = \phi \cdot 5^2 + 1^2 = 25 \phi + 1$ \\
& $u_2(CS_2) = (1-\phi) \cdot  (w_{1,1} + w_{2,1})^d = 25 (1-\phi)$\\
& \\
$CS_3$: & $u_1(CS_3) = (1-\phi)\cdot (w_{2,1} + w_{1,2})^d + w_{1,1}^d = 9(1-\phi) + 9$ \\ & $u_2(CS_3) = \phi \cdot (w_{2,1} + w_{1,2})^d = 9 \phi$ \qed
\end{tabular}
\end{table} 
\end{example}


\section{Indivisible Budgets}
We first study the setting of indivisible budgets, where each agent owns a single coin, corresponding to the scenario where every agent is involved in a single project.
This setting is sufficient to differentiate between alphabetical and contribution ordering and highlights an interesting effect.
Namely, there exist natural settings in which alphabetical ordering encourages agents to match each others' efforts,
and as a result, it leads to the completion of larger projects.

First, we introduce pairwise stability, the standard solution concept in network formation games
\cite{Jackson}. A coalition structure $CS$ is \emph{pairwise stable} if:
\begin{itemize}
\item For all $i\in N$, $u_i(CS) \geq  w_i^d$. That is, $i$ cannot improve by allocating his coin to a singleton project.
\item For all $i, j \in N$, with $w_i \geq w_j$, either $u_i(CS) \geq \phi  \cdot (w_i + w_j)^d$ or $u_j(CS) \geq (1 - \phi)   \cdot (w_i + w_j)^d$.
That is, $i$ and $j$ cannot deviate by forming a joint project.
\end{itemize}


\subsection{Research Quality}~\label{researchQualitySection}

We show that alphabetical ordering can result in higher research quality than is possible under some contribution-based scheme.
Since agents can work either by themselves or in pairs, the most difficult project that a set of agents can solve results from the combined efforts of two of its strongest agents. We call a project of this difficulty a \emph{hard project}, which can \emph{only} be solved by two of the strongest agents, and no other combination. \\


We begin by considering identical agents.

\begin{lemma}
Consider an academic game with identical agents and indivisible budgets. Then every pairwise stable coalition structure
solves the maximum number of \emph{hard projects} whenever the credit to the first author is in the range: 
$\phi \in \left(\frac{1}{2^d}, \frac{2^d - 1}{2^d}\right)$.
%
\end{lemma}
\begin{proof}
When the agents are identical, a project is \emph{hard} if solved by two agents. Without loss of generality, we can assume that each agent has a budget of size $1$.
In order for the maximum number of hard projects to be solved in every pairwise stable equilibrium, it should be the case that
two singleton agents can strictly improve their utility by working on a joint project.
The conditions for the first and second author, respectively, are: $\phi \cdot  2^d >  1^d$ and 
$(1 - \phi)\cdot   2^d >   1^d$, or
equivalently, $\phi \in \left(\frac{1}{2^d}, \frac{2^d - 1}{2^d}\right)$.
\qed~\end{proof}

Note the maximum number of hard projects is always solved under alphabetical ordering ($\phi = \frac{1}{2}$).


Next we consider a game with \emph{heavy agents} and \emph{light agents}; the weights are normalized such that the heavy agents invest coins of weight $1$ and the light agents invest coins of weight $\lambda \in (0, 1)$.
A contribution scheme can encourage same-layer collaborations (resulting in the completion of the maximum number of hard projects), or cross-layer collaborations, or simply discourage collaboration (by giving very little credit to second authors, for example).
The proof is included in the appendix. 

\begin{theorem} \label{thm:two_layer}
Consider an academic game with indivisible budgets and two types of agents, light and heavy.
Then every pairwise stable coalition structure has:
\begin{enumerate}
	\item Only same-layer collaborations when $$\frac{(1+\lambda)^d}{2^d + (1+\lambda)^d} < \phi < \min\left(1 - \frac{1}{2^d}, \frac{1}{(1+ \lambda)^d}, \frac{2^d}{2^d + (1+\lambda)^d}\right)$$
	\item Only cross-layer collaborations when $$\max\left( 1 - \frac{1}{2^d}, \frac{1}{(1+\lambda)^d}, \frac{2^d}{2^d + (1+\lambda)^d}\right) < \phi < 1 - \left( \frac{\lambda}{1+ \lambda} \right)^d$$
	\item No collaboration when $1 - \frac{1}{2^d} < \phi < \frac{1}{(1+ \lambda)^d}$ or
	$\phi > \max\left( 1 - \frac{1}{2^d}, 1 - \left( \frac{\lambda}{1 + \lambda}\right)^d\right)$
\end{enumerate}
\end{theorem}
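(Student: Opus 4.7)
The plan is to analyze each of the three cases by tabulating per-role payoffs and then applying the two pairwise-stability conditions: individual rationality for every existing pair, and the absence of a mutually strictly improving deviation for every non-pair. The relevant payoffs are: singleton heavy $=1$, singleton light $=\lambda^r$, H-H authors $=\phi\cdot 2^r$ and $(1-\phi)\cdot 2^r$, L-L authors $=\phi(2\lambda)^r$ and $(1-\phi)(2\lambda)^r$, and H-L authors $=\phi(1+\lambda)^r$ (heavy, first) and $(1-\phi)(1+\lambda)^r$ (light, second).

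For Case 3 (no collaboration), I verify that no pair type is mutually individually rational against the singleton baseline. H-H and L-L fail when the second author prefers singleton, $(1-\phi)\cdot 2^r < 1$, i.e., $\phi > 1-1/2^r$. H-L fails when either $\phi(1+\lambda)^r < 1$ or $(1-\phi)(1+\lambda)^r < \lambda^r$, i.e., $\phi < 1/(1+\lambda)^r$ or $\phi > 1-(\lambda/(1+\lambda))^r$. Intersecting these constraints produces the two sub-intervals stated.

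For Case 2 (only cross-layer), I require H-L to be individually rational for both roles ($1/(1+\lambda)^r < \phi < 1-(\lambda/(1+\lambda))^r$), H-H and L-L to be individually irrational ($\phi > 1-1/2^r$), and no two heavies in separate H-L pairs to profitably deviate into a new H-H. Two first-author heavies each with utility $\phi(1+\lambda)^r$ would both strictly improve by switching iff the resulting second-author heavy improves, $(1-\phi)\cdot 2^r > \phi(1+\lambda)^r$; blocking this deviation yields $\phi \geq 2^r/(2^r+(1+\lambda)^r)$. The analogous L-L deviation is automatic because a second-author light in L-L earns $(1-\phi)(2\lambda)^r < (1-\phi)(1+\lambda)^r$ whenever $\lambda < 1$. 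The intersection of the stated inequalities gives the advertised range. For Case 1 (only same-layer), H-H and L-L must be individually rational ($\phi < 1-1/2^r$), H-L must fail IR on its own ($\phi < 1/(1+\lambda)^r$ — the upper IR-failure branch is inconsistent with the other bounds), and no H-L deviation can be mutually improving. The critical deviation is between a second-author heavy in H-H and a second-author light in L-L: the light always wants to switch because $1+\lambda > 2\lambda$, so the heavy must not, giving $\phi(1+\lambda)^r \leq (1-\phi)\cdot 2^r$, i.e., $\phi \leq 2^r/(2^r+(1+\lambda)^r)$. These three upper bounds produce the $\min$ in the theorem; the lower bound $\phi > (1+\lambda)^r/(2^r+(1+\lambda)^r)$ is the complementary form $1-\phi < 2^r/(2^r+(1+\lambda)^r)$, which under the standing assumption $\phi\geq 1/2$ together with $\lambda<1$ is automatically satisfied, so it records the symmetric shape of the feasible region rather than a genuinely binding constraint.

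The main obstacle will be completeness: confirming that under each case's parameter range no mixed coalition structure (some same-layer pairs together with a stray cross-layer pair or a singleton) can also be pairwise stable. The argument proceeds by chaining deviations — any H-L pair under Case 1 parameters violates IR for the heavy, so it dissolves; any residual H-H or L-L under Case 2 parameters violates IR for the second author; and any singleton heavy or light would pair up into the appropriate IR-stable pair whenever a matching partner is available. Once these structural eliminations reduce the space of candidate stable structures to the canonical one, the remaining content of the theorem is routine algebra comparing the six per-role payoffs above.
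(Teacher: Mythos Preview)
Your approach is essentially the same as the paper's: both proceed by case analysis over the possible local configurations (two singletons of various types, a same-layer pair plus a singleton, two cross-layer pairs, etc.) and derive the displayed inequalities by comparing the six per-role payoffs you tabulated. The paper organizes this as three lemmas, one per case, and for each lemma lists the ``bad'' local configurations and exhibits a specific blocking deviation; you instead check individual rationality for each pair type first and then verify that no mutually improving deviation exists between roles, which amounts to the same inequalities in a different order.

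One substantive difference worth noting: in Case~1 the paper blocks a configuration with one cross-layer pair by having the two \emph{light} players deviate to an L--L pair, which produces the lower bound $\phi>(1+\lambda)^r/\bigl((2\lambda)^r+(1+\lambda)^r\bigr)$ (in fact a slightly different expression from the theorem statement). You instead block every cross-layer pair directly via the heavy player's IR failure, $\phi(1+\lambda)^r<1$, which is already forced by $\phi<1/(1+\lambda)^r$. Your route is cleaner here and explains your correct observation that the stated lower bound is not binding once $\phi\geq 1/2$ and $\lambda<1$; the paper's choice of deviation in that subcase imposes an unnecessarily strong condition. Otherwise the two arguments coincide.
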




Observe that by setting $\phi=\frac{1}{2}$, we obtain that alphabetical ordering solves the highest number of hard projects, while ordering by contribution in the range given by Case 2 solves the highest number of intermediate projects (requiring one heavy and one light coin).

\subsection{Free Riding}\label{fairnessSection}

It has been argued that alphabetical ordering is unfair~\cite{Lake}, as it gives the same credit to all authors even when they do not contribute equally. However, the fundamental difficulty leading to free riding is that the contribution scheme is fixed, not whether it is alphabetical or contribution-based.

 Even when authors are ordering by contribution, members of academic communities have predetermined notions of the proportion of work contributed by each author\footnote{For instance, while in some communities the second author is assumed to have done moderately less than the first, in others, the contribution of the second author is considered to be negligible compared with that of the first.}. Authors cannot choose the contribution vector, as doing so would require changing the perception of the entire community.

 We show that the use of a fixed contribution vector necessarily leads to the free riding effect. In addition, the degree of free riding admitted by alphabetical ordering is \emph{not} the worst possible.

Formally, if two agents allocate weights $x$ and $y$, respectively, to a joint project, then the rewards should be proportional to the effort invested,
i.e.
$\left( \frac{x}{x + y} \right) (x+ y)^d$ and $\left( \frac{y}{x + y} \right) (x + y)^d$, respectively.

Thus, the \emph{fair contribution vector} for this project is uniquely defined as: $\mathcal{C} = \left[\frac{x}{x + y}, \frac{y}{x + y} \right]$. All other contribution vectors result in free riding.

For each agent, the \emph{free riding index} is the (normalized) difference between the perceived contribution and the actual contribution. Recall that $w(P)$ denotes the weight of project $P$.
Given a contribution scheme $\phi$, the \emph{free riding index} of agent $i$ in a coalition structure $CS$ where he solves project $P$ is:
\[ \mathcal{L}_i =
\left\{
	\begin{array}{ll}
	0 & \mbox{if $i$ completes $P$ alone} \\
	\frac{\phi \cdot w(P)-w_i}{w(P)}   & \mbox{if $i$ is the first author on $P$} \\
	\frac{(1-\phi) w(P)-w_i}{w(P)} & \mbox{if $i$ is the second author on $P$}
	\end{array}
\right.
\]

We begin by considering identical agents. Since they contribute equally to a project, alphabetical author ordering corresponds to the unique fair contribution vector for their project. In particular, larger values of $\phi$ result in more free riding for first authors.

\begin{theorem}\label{alpha fairness}
Consider an academic game with identical agents and indivisible budgets.
Then alphabetical ordering is the unique fair contribution vector, while when the credit to the first author, $\phi$, is greater than $\frac{2^d - 1}{2^d}$, then every pairwise stable coalition structure has
a free riding index of $\phi-1/2$ for at least $n/2-1$ of the agents.
\end{theorem}

\begin{proof}
It is immediate that first authors always benefit from this collaboration. The second authors would only participate as long as $(1-\phi) \cdot 2^d >  1^d$, or equivalently, $\phi < 1-\frac{1}{2^d}$. 
The free riding index of all first authors is $\phi-\frac{1}{2}$. 
\qed~\end{proof}

The free riding index is highest when the contribution vector is steep; the credit to the first author can be as high as $\phi = \frac{2^d - 1}{2^d}$ without preventing the second author from collaborating. In this case, the free riding index of all first authors is 
$\frac{1}{2}-\frac{1}{2^d}$. For example, when $d = 2$, the maximum free riding index is $25\%$. 
In general, the higher the reward of collaboration, the more free riding can occur.

\begin{corollary}\label{corr:fainessContribution}
There exist academic games and contribution-based ordering schemes such that in every pairwise stable coalition
structure, the free riding index is
$\frac{1}{2} - \frac{1}{2^d}$ for half of the agents.
\end{corollary}

Next, we find the largest free riding index that can occur under alphabetical ordering; the argument follows by the definition of the model.

\begin{lemma} \label{proposition:unfairness_loss_alphabetical_indivisible}
Consider an academic game with heavy and light agents, of weights 1 and $\lambda$ respectively.
Then in every coalition structure that is pairwise stable under alphabetical ordering, all the light agents that collaborate with heavy agents have a free riding index of $\frac{1}{2}(1-\lambda)$. The worst case is obtained when $\lambda = \frac{1}{2^{\frac{1}{d}}}-\frac{1}{2}$.
\end{lemma}

 It follows that the largest possible free riding index under alphabetical ordering is no greater than $2^{-\frac{1}{d}}-\frac{1}{2}$.
For example, when $d=2$, the free riding index is $\approx 20$\%.

While in the previous results we showed that the contribution vector can affect as many as half of the agents involved, the same worst case bounds hold for individual agents in arbitrary games. 


\begin{theorem}
Consider an academic game in which the agents have indivisible budgets of arbitrary sizes. Then the highest free riding index of any agent occurs under contribution-based ordering.
Moreover, the highest amount of free-riding that occurs in any project solved under alphabetical ordering is smaller than under contribution-based ordering.
\end{theorem}
\begin{proof}
Note that the bound in Lemma~\ref{proposition:unfairness_loss_alphabetical_indivisible} represents the highest free riding index that any agent can incur under alphabetical ordering, namely $\frac{1}{2}^{\frac{1}{d}} - \frac{1}{2}$. The result follows from Lemma~\ref{proposition:unfairness_loss_alphabetical_indivisible} and Corollary~\ref{corr:fainessContribution}, which imply that the amount of free riding can be as high as $\frac{1}{2}-\frac{1}{2^d}$ under some contribution-based ordering schemes.

To show that the highest amount of free-riding in any project solved under alphabetical ordering is smaller than under contribution-based ordering, we need to show that $\frac{1}{2}^{\frac{1}{d}} - \frac{1}{2} \leq \frac{1}{2} - \frac{1}{2^d}$, or equivalently, $\frac{1}{2}^{d} + \frac{1}{2}^{\frac{1}{d}} \leq 1$.
The inequality can be shown by analyzing the behavior of the following function: $f: \mathbb[1, \infty) \rightarrow \mathbb{R}$, where $f(x) = \frac{1}{2}^{x} + \frac{1}{2}^{\frac{1}{x}}$. Note that $f(1) = 1$ and $\lim_{x \to \infty} f(x) = 1$. The first derivative of $f$ is 
$$f'(x) = \dfrac{\ln\left(2\right)}{x^2{\cdot}2^\frac{1}{x}}-\dfrac{\ln\left(2\right)}{2^x}.$$ The function $f'(x) = 0$ has a unique solution\footnote{This can be easily checked; see, e.g., http://www.wolframalpha.com} at the point $x_0 > 1$ with the property that $2^{x_{0}} = x_0^2 \cdot 2^{\frac{1}{x_0}}$, and moreover $f'(x) < 0$ for all $x \in (1,x_0)$ and $f'(x) > 0$ for all $x > x_0$. Then $f(x) < 1$ for all $x>1$ as required.
\qed~\end{proof}


 To conclude, even though at first sight, alphabetical ordering appears to be particularly susceptible to free-riding, further investigation reveals that
 no contribution scheme is immune to free riding; neither listing authors alphabetically nor listing them in decreasing order of contribution can eliminate this effect. Moreover, the worse case attained using alphabetical ordering is better than that of some contribution based schemes.

\subsection{Stability} \label{subsec:stable}

In this section we show that alphabetical author ordering guarantees the existence of a pairwise stable coalition structure; furthermore, it can always be found in polynomial time, while non-alphabetical ordering may not have any stable coalition structure.

In the following theorems, a tie-breaking rule is defined as usual (i.e. lexicographic), and will mean the function that decides which of two authors with equal contributions comes first. 

\begin{theorem}
Every academic game with identical agents and indivisible budgets has a pairwise stable coalition structure.
\end{theorem}
\begin{proof}
Without loss of generality, assume that the deterministic tie-breaking rule lists the
agents in the order $[1, \cdots, n]$.
If $\phi \geq 1 - \frac{1}{2^d}$, then the singleton coalition structure, $CS = (\{1\}, \cdots, \{n\})$ is pairwise stable,
since no agent can improve their utility by being second author on a project.
If $\phi < 1 - \frac{1}{2^d}$, there are two cases. If
$n$ is even, then $CS^{'} = (\{1,2\}, \cdots, \{n-1, n\})$ is pairwise stable.
If $n$ is odd, then $CS^{''} = (\{2,3\}, \cdots, \{n-1, n\}, \{1\})$, is pairwise stable, since no agent has an incentive to
switch to a singleton, the coalition structure given by $CS^{''} \setminus \{1\}$ is pairwise stable (by case 1), and
agent $1$ cannot be part of a deviating pair, since no other agent has an incentive to join $1$ as a second author.
\qed~\end{proof}

\begin{theorem}
Consider an academic game with different agents and indivisible budgets.
Under alphabetical ordering, a pairwise stable coalition structure is guaranteed to exist and can be
computed in polynomial time.
\end{theorem}
\begin{proof}
Let $N = \{1, \cdots, n\}$ be the set of agents and without loss of generality, let $w_1 \geq w_2 \geq \cdots \geq w_n$.
Start with an empty coalition structure: $CS = \emptyset$. Iteratively, pair whenever possible the two agents with the heaviest weights
among the remaining agents. Let $\{k, k+1, \cdots, n\}$ be the remaining set of agents during some iteration.
If 
\[
\frac{1}{2} \cdot (w_{k} + w_{k+1})^{d} \geq w_{k}^{d}
\]
and 
\[
\frac{1}{2}\cdot (w_{k}+ w_{k+1})^{d} \geq w_{k+1}^{d}
\]
then
let $CS \leftarrow CS \cup \{k, k + 1\}$, otherwise, $CS \leftarrow CS \cup \{k\}$.
We claim that the resulting coalition structure, $CS$, is pairwise stable. If $CS$ contains coalition $\{1, 2\}$,
then agent $1$ does not have an incentive to form another
pair or move to a singleton, since agent $1$'s utility in $CS$, $u_1(CS)$, verifies the following inequalities:
\[
u_1(CS) \geq w_{1}^{d}
\]
and 
\[
u_1(CS) \geq \frac{1}{2} \cdot (w_1 + w_j)^{d}, \forall j \in N \setminus \{1\}
\]
Similarly, agent $1$ does not deviate if it is in a singleton coalition structure. Iteratively, whenever the first $k$ agents do
not have an incentive to deviate, agent $k + 1$ does not have an incentive to deviate either. Thus $CS$ is pairwise stable.
\qed~\end{proof}

On the other hand, contribution-based ordering
does not guarantee the existence of stable coalition structures even under fixed tie-breaking rules.

\begin{theorem}
There exist academic games with different agents and indivisible budgets,
such that when contribution-based ordering is used, no pairwise stable coalition structure exists.
\end{theorem}
\begin{proof}
Consider a three agent game, with weights $1$, $1+ \varepsilon$, and $1 + 2 \varepsilon$, respectively, where $\alpha = 1$, $d= 2$, $\varepsilon = 0.8$,
and $\phi = 0.8$. It can be easily verified that no coalition structure is stable. The singleton coalition structure is blocked by the agents
with weights $\{1,1+\varepsilon\}$, the coalition structure $CS = (\{1+\varepsilon, 1\}, \{1 + 2 \varepsilon\})$ is blocked by $\{1+2\varepsilon,1\}$,
$CS^{'} = (\{1+2\varepsilon,1+\varepsilon\},\{1\})$ is blocked by $\{1+\varepsilon,1\}$, and
$CS^{''} = (\{1+2\varepsilon,1\}, \{1+\varepsilon\})$ is blocked by $\{1+2\varepsilon,1+\varepsilon\}$.
\qed~\end{proof}

Next we show that pairwise stable coalition structures can be found in polynomial time. The proof uses a connection with the stable roommates problem, which is a generalization of the stable marriage problem and can be roughly stated as follows.
\begin{quote}
\emph{\textbf{Stable Roommates Problem}: There are $2n$ participants and $n$ rooms (each accommodating two people); each participant wishes to find a roommate to live with. The participants have a ranking for the others in strict order of preference. 
A matching is a set of $n$ pairs of participants, such that the agents in each pair will share a room together. The goal is to find stable matching, i.e. where it will not be the case that for two participants $i$ and $j$, they each prefer each other to their current roommates.}
\end{quote}
The stable roommates problem was studied in a very influential paper by Irving~\cite{Irving85}; a polynomial time algorithm for preferences with ties (where participants may have equal value for some roommates) and incomplete lists (where a participant prefers living alone to living with some of the potential roommates) is presented for instance in the Ph.d. thesis of Scott~\cite{Scott}. \\

We now rely on the roommates problem to prove the following result: 

\begin{theorem}
Consider an academic game with different agents and indivisible budgets. Then a pairwise stable coalition structure can be found in polynomial time when it exists.
\end{theorem}
\begin{proof}
The game is an instance of the
stable roommates problem with ties and incomplete lists, where an agent $i$ finds another agent $j$ unacceptable if $i$ prefers working alone instead
of forming a pair with $j$.  In our setting, each agent can be mapped to a ``roommate'' and the preferences are induced by how valuable the coin of each agent is.
\qed~\end{proof}

\section{Discrete Budgets}

We now turn our attention to the general model, where each agent has multiple coins, allowing agents to work on multiple projects simultaneously.
The model of discrete budgets allows uncovering several phenomena that cannot be observed
in the indivisible budget setting, such as the following:
there exist many games in which the \emph{contribution vector does not matter}, since the agents can perform rotations, by alternating between being first and second author on joint projects. Rotations can allow agents to reach optimal research quality as well
as obtain perfect fairness.

Our solution concept is pairwise stability for games with overlapping coalition structures.
Given that an agent can be involved in multiple projects simultaneously, it is important that
one estimates correctly the reactions from the rest of the agents before agreeing to participate in a deviation. We follow the recent literature on overlapping coalition formation games
(\cite{Elkind}, Zick, Chalkiadakis, and Elkind \cite{Zick}), and study \emph{sensitive reactions} to a deviation.
In short, when agent $i$ is involved in a deviation from a coalition structure $CS$, $i$ can expect that:
\begin{itemize}
\item Every non-deviating agent who is hurt by the deviation retaliates
and drops all the projects with $i$. Note that unless $i$ and $j$ agreed to deviate together,
an agent $j$ is hurt by the deviation when at least one of $j$'s projects has been discontinued by the deviator(s).
\item The unaffected agents are neutral and maintain all of their existing projects with $i$.
\end{itemize}

\begin{definition}
A coalition structure $CS$ is pairwise stable if:
\begin{itemize}
\item No agent $i$ can drop some of his existing projects and strictly improve in the new
coalition structure, $CS^{'}$
\item No two agents $i$ and $j$, can rearrange the projects among themselves and possibly drop some of the
projects with the remaining agents, such that both $i$ and $j$ strictly improve their utility in $CS^{'}$,
\end{itemize}
where $CS^{'}$ is the resulting coalition structure, in which non-deviators have sensitive reactions to a deviation.
\end{definition}

The next example illustrates pairwise stability for discrete budgets.

\begin{example}
Consider an academic game with three agents, sets of coins: $C_1 = \{c_{1,1}, c_{1,2}\}$, $C_2 = \{c_{2,1}, c_{2,2}\}$, and
$C_3 = \{c_{3,1}, c_{3,2}\}$, and the coalition structure $CS = (\{c_{1,1}$, $c_{2,1}\}$, $\{c_{2,2}$, $c_{3,1}\}$, $\{c_{1,2}$, $c_{3,2}\})$.

If agent $1$ deviates by allocating the coin $c_{1,2}$ to a singleton project, then $1$ expects that the resulting coalition
structure is $CS^{'}$ $=$ $(\{c_{1,1}$, $c_{2,1}\}$, $\{c_{1,2}\}$, $\{c_{3,2}\}$, $\{c_{2,2}$, $c_{3,1}\})$, since agent $2$ is not hurt by the
deviation.

On the other hand, if the deviating coalition is $\{1,2\}$ and the deviation consists of forming the joint project $\{c_{1,1}, c_{2,2}\}$,
then $1$ and $2$ expect the resulting coalition structure is $CS^{''}$ $=$ $(\{c_{1,1}$, $c_{2,2}$ $\}$, $\{c_{2,1} \},$ $\{c_{1,2}\},$
$\{c_{3,1}\}$, $\{c_{3,2}\})$, since agent $3$ is hurt by the deviation and drops \emph{all} the projects with the deviators.
\qed~\end{example}

\subsection{Rotations}

Agents can sometimes overcome the limitations of a fixed contribution scheme. That is, they can simultaneously solve the highest number of hardest projects and eliminate free riding, regardless of the contribution vector.
We refer to this phenomenon as \emph{rotations}:
agents collaborating on multiple projects agree that one of them is the first author on half of their projects, while the other is first on the remaining projects, regardless of whether this represents their actual contributions. \\

We first demonstrate this for budgets with multiple identical coins.

\begin{theorem}\label{rotations1}
There exist academic games with discrete budgets and multiple identical coins
such that for every $\phi$, the maximum number of hard projects is solved in a pairwise stable equilibrium and no free riding occurs.
\end{theorem}
\begin{proof}
Let $\phi < 1$ and consider a two agent game, such that agent $1$ has coins $\{c_{1,1}, c_{1,2}\}$, agent $2$
has coins $\{c_{2,1}, c_{2,2}\}$, and all the coins have weight $1$. Consider the coalition structure $CS = (C_1, C_2)$, where
$C_1 = \{c_{1,1}, c_{2,1}\}$ and $C_2 = \{c_{2,2}, c_{1,2}\}$, such that agent $1$ is the first author on project $C_1$ and agent
$2$ is the first author on project $C_2$. 

It can be verified that both agents receive the best possible utility, which coincides
with the fair allocation given by alphabetical ordering.
Moreover, $CS$ is pairwise stable:
no agent can gain by investing the coin from their second-author project to a singleton project, since the other agent
will retaliate and drop the other project as well.
\qed~\end{proof}


Rotations can also be used to eliminate free riding when coins are not required to be identical, and projects require the combination of different coin types.\\


Next we study rotations in games where there exists a \emph{conference tier}, i.e. for which there is a threshold $t > 0$ such that a project of weight $w$ gives reward $\mathcal{F}(w)$ if $w \geq t$, and zero otherwise.

\begin{theorem}\label{rotations2}
Let $1+\lambda$ be the conference tier. Then there are academic games with discrete budgets, where each agent has an equal number of light and heavy coins, of weight $1$ and $\lambda \in (0,1)$, respectively, such that there exist pairwise stable coalition structures with no free riding.
\end{theorem}
\begin{proof}
Let $\phi < 1$ and consider a two agent game, such that agent $1$ has coins $\{c_{1,1}, c_{1,2}\}$, agent $2$
has coins $\{c_{2,1}, c_{2,2}\}$, and all the coins have weight $1$. Consider the coalition structure $CS = (C_1, C_2)$, where
$C_1 = \{c_{1,1}, c_{2,1}\}$ and $C_2 = \{c_{2,2}, c_{1,2}\}$, such that agent $1$ is the first author on project $C_1$ and agent
$2$ is the first author on project $C_2$. It can be verified that both agents receive the best possible utility, which coincides
with the fair allocation given by alphabetical ordering.
Moreover, the coalition structure is pairwise stable.
None of the agents can gain by investing the coin from their second-author project to a singleton project, since the other agent
will retaliate and drop the other project as well.
\qed~\end{proof}

\subsection{Implications for the Social Network}
For the next result we illustrate the following phenomenon observed in~\cite{Newman}: when the agents use ordering by
contribution, they have more co-authors than when using alphabetical ordering.
We consider the setting in which every agent has a budget consisting of heavy coins and light coins.
The light coins represent very little effort, such as ``cheap talk'',
but can contribute to improving the quality of a paper. Allowing for cheap talk results in a much higher number of collaborations.

\begin{theorem}~\label{social network result}
Consider an academic game with discrete budgets, where each agent has several heavy and light coins,
of sizes $1$ and $\varepsilon$, respectively, such that $0 < \varepsilon \ll 1$. Moreover, the conference tier is $1$ and
each agent has more heavy coins than light coins.

Then whenever $\phi > \max \left(\frac{2^d}{2^d + (1+\varepsilon)^d}, \frac{1}{(1+\varepsilon)^d}\right)$,
every pairwise stable equilibrium solves the maximum number of projects and the average number of
collaborators per agent is the highest possible.
\end{theorem}
\begin{proof}
To ensure that every pairwise stable coalition structure solves the maximum number of collaborations, the best investment of a
heavy coin should be to pair it with a small coin.
That is, an agent prefers being the first author on a project of weight $1+\varepsilon$ instead of either
second author on a hard project or the only author
on a singleton project. The conditions are: $(1-\phi) \cdot 2^d < \phi \cdot (1+\varepsilon)^d$ and $\phi \cdot (1+\varepsilon)^d > 1$; equivalently, $\phi$ $>$ $\max \left(\frac{1}{(1+\varepsilon)^d}, \frac{2^d}{2^d + (1+\varepsilon)^d}\right)$.

Then in every pairwise stable coalition structure, all the heavy coins are paired with small coins, and the average number
of collaborators is maximal.
Note that while alphabetical ordering solves the highest number of hard projects, both the number of projects
completed and the number of collaborators are twice as low.
Finally, the singleton coalition structure solves the same number of projects above the conference tier.
However, in this case, the agents have no collaborators, and the quality of the projects is lower compared to the case when cheap talk is allowed.
\qed~\end{proof}

There exist games in which ordering by contribution allows to simultaneously maximize the number of hard projects
and the total number of projects. We illustrate this phenomenon when there exist both an upper and lower bound on the
hardness of the rewarded projects.

\begin{corollary}
Consider an academic game with discrete budgets, where each agent has several heavy and light coins, of sizes $1$ and $\lambda \in (0,1)$,
respectively, the conference tier is $1$, and the maximum project hardness is $1 + \lambda$.
Each agent has more heavy coins than small coins.

Then whenever $\phi > \max \left(\frac{2^d}{2^d + (1+\lambda)^d}, \frac{1}{(1+\lambda)^d}\right)$,
every pairwise stable equilibrium solves the maximum number of projects, each of the projects solved is the hardest possible,
and the average number of collaborators per agent is the highest possible.
\end{corollary}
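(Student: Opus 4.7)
The plan is to deduce this corollary from Proposition~\ref{social network result} applied with $\varepsilon := \lambda$, using the explicit hardness cap $1+\lambda$ to handle the one complication that is new relative to the cheap-talk setting. Since $2 > 1+\lambda$ for every $\lambda < 1$, the cap immediately forbids any big--big pair, so the only cross-player candidates are big--small pairs of weight $1+\lambda$ and small--small pairs of weight $2\lambda$, while singletons have weight either $1$ (big) or $\lambda$ (small, which is below the conference tier).

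First, I would observe that the threshold $\phi > \max\bigl(\tfrac{2^r}{2^r+(1+\lambda)^r},\tfrac{1}{(1+\lambda)^r}\bigr)$ is exactly the hypothesis of Proposition~\ref{social network result} after renaming $\varepsilon$ to $\lambda$, so the proposition gives, in every pairwise stable equilibrium, that every small coin is matched with a big coin owned by a different player and every remaining big coin is solved as a singleton. This yields $\sum_i B_i$ rewarded projects in total (the maximum possible) and attains the maximum average number of collaborators per player. The hardness part then follows immediately: every collaborative project has weight exactly $1+\lambda$, which equals the hardness cap, and every singleton is a big coin of weight $1$, which is the only on-tier singleton, so each solved project is the hardest possible in its category and every pair attains the maximum allowed hardness.

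The main obstacle is ruling out small--small pairings, which become on-tier once $\lambda \geq \tfrac{1}{2}$ and thus fall outside the $\varepsilon \ll 1$ intuition for Proposition~\ref{social network result}. I would handle this with a surplus argument: since each player has strictly more big coins than small coins, we have $\sum_i B_i > \sum_i S_i$, so in any coalition structure containing a small--small pair there must be some singleton big elsewhere. The second author of the small--small pair and the owner of that singleton big can then jointly deviate to form a big--small pair; under $\phi > \tfrac{1}{(1+\lambda)^r}$ the big-owner strictly improves from $1$ to $\phi(1+\lambda)^r$, and the small-owner strictly improves from $(1-\phi)(2\lambda)^r$ to $(1-\phi)(1+\lambda)^r$ because $1+\lambda > 2\lambda$, while refined reactions from the abandoned small-owner cannot undo this gain since the deviating pair shares no other projects with them. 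Hence no pairwise stable equilibrium contains a small--small pair, and the conclusion of Proposition~\ref{social network result} transfers verbatim to the setting with the hardness cap.
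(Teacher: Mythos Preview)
The paper gives no separate proof; the corollary is stated directly after Proposition~\ref{social network result} as an immediate consequence, the hardness cap $1+\lambda$ now ruling out big--big pairs outright rather than via the inequality $\phi(1+\varepsilon)^r>(1-\phi)2^r$. Your first two paragraphs reproduce exactly this reasoning, so on the main line you and the paper agree. Your third paragraph goes beyond the paper by treating small--small pairs, which are on tier once $\lambda\ge\tfrac12$; neither Proposition~\ref{social network result} nor its proof considers this, since there $\varepsilon\ll1$ forces $2\varepsilon<1$.

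Your surplus count is correct, but the deviation you build from it is not airtight. The assertion that the deviating pair ``shares no other projects'' with the abandoned first author is unjustified, and with only two players the big-singleton owner $k\ne j$ is forced to be the first author $i$ of the small--small pair; then $i$'s loss $\phi(2\lambda)^r$ from relinquishing that first authorship can exceed the gain $\phi(1+\lambda)^r-1$ from upgrading a big singleton. Concretely, with two players each holding three big and two small coins, $r=2$, $\lambda=0.9$, $\phi=0.9$ (the threshold is $\approx0.526$), the structure with six big singletons and two small--small pairs on which player~$1$ is first author is pairwise stable: player~$1$'s utility $3+2\cdot0.9\cdot(1.8)^2=8.832$ is the maximum over all rearrangements of the two players' coins, so no joint deviation makes player~$1$ strictly better off. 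This equilibrium contains projects of weight $1.8<1+\lambda$, so ``each of the projects solved is the hardest possible'' fails. The corollary therefore needs the tacit restriction $\lambda<\tfrac12$, matching the $\varepsilon\ll1$ regime of Proposition~\ref{social network result}; under that restriction your third paragraph is unnecessary and your first two already suffice.
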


\section{Discussion}


We introduced a basic game theoretic model for studying author ordering schemes, which already illustrates interesting phenomena that can occur in richer domains. 
The model offers a compelling explanation for several real-world phenomena, showing that alphabetical ordering is positively correlated with research quality in some scenarios, while contribution based ordering results in a larger number of projects completed and denser social networks.

Our model makes several predictions on the effects of author ordering schemes, which prompt further theoretical and empirical study. 
In particular, we show that rotations can be used to overcome the limitations of fixed ordering schemes and that the worst case of free riding occurs under some contribution-based schemes. It would be interesting to empirically evaluate how frequently free riding and rotations appear in practice, as well as their influence on individuals and their communities. 

In this paper, we focus on modeling communities where the contribution scheme is established. There are other communities, however, where there is some ambiguity about the contribution schemes (for example, the dominant scheme may be contribution-based, but some authors may still choose to order alphabetically). To further investigate this phenomenon, it would be interesting to study the impact of a probabilistic contribution scheme.  Another natural generalization is to increase the number of authors who may collaborate on the same project. This generalization may lead to additional insight on the differences between contribution and alphabetical schemes with regards to research quality, free riding, and density of the collaboration network. 

Another important direction for future work is to understand precisely the conditions under which alphabetical ordering is better suited than contribution-based ordering, and vice-versa. In those communities where alphabetical ordering is indeed the closest to optimal scheme, policy changes may be called for to alleviate the effect of alphabetical ordering of unfairly favoring the authors with earlier names in the alphabet. Such changes could include changing the citation styles for alphabetical papers or possibly using random ordering, with a note that the contribution is meant to be weighted equally.
Moreover, it would be interesting to study models that capture additional realistic phenomena, such as reward functions with diminishing returns, arbitrary coalition sizes, heterogenous skill sets, and dynamics of social networks that may influence the equilibria reached.

More generally, the following implementation theory question remains open: \emph{Given a scientific community, what is the optimal credit allocation scheme?}
This work is a first step in the direction of understanding this question, which is at the heart of resource allocation in academic research and arguably the long term development of society.

\begin{acknowledgements}
We would like to thank Peter Bro Miltersen, Yair Zick, and the anonymous reviewers for useful feedback that helped improve the paper.
Simina Br\^{a}nzei acknowledges support from the Danish National Research Foundation
and The National Science Foundation of China (under the grant 61361136003) for
the Sino-Danish Center for the Theory of Interactive Computation and from the Center for
Research in Foundations of Electronic Markets (CFEM), supported by the Danish
Strategic Research Council. Simina was also supported by ISF grant 1435/14 administered by the Israeli Academy of Sciences and Israel-USA Bi-national Science Foundation (BSF) grant  2014389, and the I-CORE Program of the Planning and Budgeting Committee and The Israel Science Foundation.
\end{acknowledgements}

\bibliographystyle{spmpsci}      

\newpage

\section{Appendix}

In this section we include the proof of Theorem \ref{thm:two_layer}. \\

\textbf{Theorem \ref{thm:two_layer}} (restated):
\emph{
Consider an academic game with indivisible budgets and two types of players, heavy and light (with weights $1$ and $\lambda$, respectively), where $0 < \lambda < 1$.
Then every pairwise stable coalition structure has
\begin{enumerate}
	\item Only same-layer collaborations when $$\frac{(1+\lambda)^d}{2^d + (1+\lambda)^d} < \phi < \min\left(1 - \frac{1}{2^d}, \frac{1}{(1+ \lambda)^d}, \frac{2^d}{2^d + (1+\lambda)^d}\right)$$
	\item Only cross-layer collaborations when $$\max\left( 1 - \frac{1}{2^d}, \frac{1}{(1+\lambda)^d}, \frac{2^d}{2^d + (1+\lambda)^d}\right) < \phi < 1 - \left( \frac{\lambda}{1+ \lambda} \right)^d$$
	\item No collaboration when $1 - \frac{1}{2^d} < \phi < \frac{1}{(1+ \lambda)^d}$ or
	$\phi > \max\left( 1 - \frac{1}{2^d}, 1 - \left( \frac{\lambda}{1 + \lambda}\right)^d\right)$.
\end{enumerate}
}

\begin{proof}
The proof follows from Lemma \ref{lem:two_player_same_layer}, Lemma \ref{lem:two_layer_cross}, and Lemma \ref{lem:two_layer_none} below; each Lemma covers one of the cases in the theorem.
\qed~\end{proof}

\begin{lemma} \label{lem:two_player_same_layer}
Consider an academic game with indivisible budgets and two types of players, of weights $1$ and $\lambda$, respectively, where $0 < \lambda < 1$.
Then every pairwise stable coalition structure has only same-layer collaborations when
 $\frac{(1+\lambda)^d}{2^d + (1+\lambda)^d} < \phi < \min\left(1 - \frac{1}{2^d}, \frac{1}{(1+ \lambda)^d}, \frac{2^d}{2^d + (1+\lambda)^d}\right)$.
\end{lemma}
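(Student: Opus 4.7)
The plan is to show that under the hypothesized range of $\phi$, no pairwise stable coalition structure can contain a heavy--light (cross-layer) pair; since the lemma asserts that every collaboration that forms is same-layer, ruling out cross-layer pairs immediately suffices. The central tool is the individual rationality clause of pairwise stability, namely $u_i(CS) \geq w_i^r$ for every $i$.

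First I would exploit the fact that budgets are indivisible: each player holds a single coin and thus participates in at most one project, so a heavy player $H$ paired with a light player $L$ has utility exactly $u_H(CS) = \phi(1+\lambda)^r$. Pairwise stability forces $u_H(CS) \geq w_H^r = 1$, but the assumed upper bound $\phi < 1/(1+\lambda)^r$ yields $\phi(1+\lambda)^r < 1$, a contradiction. Hence no cross-layer pair appears in any pairwise stable $CS$, and every pair present must be same-layer. I would then note that this argument also blocks pairwise deviations that would create a fresh cross-layer pair: the heavy side of any such prospective deviation is again capped at $\phi(1+\lambda)^r < 1 \leq u_H(CS)$, so the heavy participant cannot strictly improve and the deviation is never executed.

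I would also flag that several of the remaining bounds on $\phi$ in the hypothesis are not actively used in the cross-layer exclusion. The bound $\phi < 2^r/(2^r+(1+\lambda)^r)$ is in fact implied by the other two, since $\phi(1+\lambda)^r < 1 < (1-\phi)2^r$ follows from $\phi < 1/(1+\lambda)^r$ combined with $\phi < 1 - 1/2^r$. The bound $\phi < 1 - 1/2^r$ ensures that the second author in a same-layer pair strictly prefers it to working alone, which is what keeps the same-layer collaborations from unravelling into singletons; and the lower bound $\phi > (1+\lambda)^r/(2^r+(1+\lambda)^r)$ is there to demarcate this lemma's parameter region cleanly from those handled by Lemma~\ref{lem:two_layer_cross} and Lemma~\ref{lem:two_layer_none}. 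The main obstacle I anticipate is not the structural argument---a single appeal to individual rationality suffices---but the bookkeeping needed to confirm that the stated $\phi$-range is non-empty and genuinely disjoint from the ranges of the companion lemmas for all admissible $\lambda \in (0,1)$ and $r > 1$; the cross-layer exclusion itself is immediate.
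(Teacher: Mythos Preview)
Your argument is correct and considerably more direct than the paper's. You exploit a single inequality: from $\phi < 1/(1+\lambda)^r$, the heavy member of any cross-layer pair earns $\phi(1+\lambda)^r < 1$, violating the individual-rationality clause of pairwise stability. That alone kills every cross-layer coalition, and your observation that the bound $\phi < 2^r/(2^r+(1+\lambda)^r)$ is already implied by the other two upper bounds is also correct.

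The paper takes a longer route: it runs a case split over configurations containing a cross-layer pair (two cross-layer pairs; one cross-layer pair plus a heavy singleton; one cross-layer pair plus a light singleton) and in each case exhibits a \emph{pairwise} blocking deviation by two players. The bounds $\phi < 2^r/(2^r+(1+\lambda)^r)$ and the lower bound on $\phi$ enter through these cases. Only in a final ``in addition'' clause does the paper invoke the individual-rationality argument you use from the outset---which, as your proof shows, renders the earlier case analysis unnecessary for the exclusion of cross-layer pairs.

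One difference in scope: the paper's proof also opens with the case of two identical singletons, showing they deviate to a same-layer pair (this is where $\phi < 1 - 1/2^r$ is actively used). So the paper establishes not only that every collaboration present is same-layer but that same-layer collaborations actually form. You allude to this role of the bound but phrase it as preventing unravelling rather than forcing formation; if the lemma is read as it functions inside Theorem~\ref{thm:two_layer}, where ``only same-layer'' is meant in contrast to ``only cross-layer'' and ``no collaboration,'' that formation claim is part of the content and would be worth stating explicitly.
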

\begin{proof}
For every pairwise stable coalition structure to solve the maximum number of same-layer collaborations, it should be the case
that whenever a coalition structure contains:
\begin{itemize}
\item 2 identical singleton projects: the two players can improve their utility by deviating to a pair. That is, $(1-\phi)\cdot 2^d > 1$
and $(1 - \phi) \cdot (2\lambda)^d > \lambda^d$, i.e. $\phi < 1 - \frac{1}{2^d}$.
\item 2 cross-layer projects: then there exists an improving deviation by two players from the same layer.
It is sufficient to require that the two heavy players involved in the cross layer projects deviate together: $\phi \cdot 2^d > \phi \cdot (1 + \lambda)^d$ and
$(1-\phi) \cdot 2^d > \phi \cdot (1 + \lambda)^d$. Thus $\phi < \frac{2^d}{2^d + (1 + \lambda)^d}$.
\item 1 cross-layer project and 1 heavy singleton project: then the two heavy players deviate to a pair. The two heavy players can deviate
to a pair when $\phi \cdot 2^d > \phi \cdot (1 + \lambda)^d$ and $(1-\phi)2^d > 1$, i.e. $\phi < 1 - \frac{1}{2^d}$.
\item 1 cross-layer project and 1 light singleton project: then the two light players deviate to a pair. The two light players
can deviate to a pair when $\phi \cdot (2\lambda)^d > (1-\phi)(1+\lambda)^d$ and $(1-\phi)(2\lambda)^d > \lambda^d$.
That is, $\frac{(1+\lambda)^d}{(2\lambda)^d + (1+\lambda)^d} < \phi < 1 - \frac{1}{2^d}$.
\end{itemize}
In addition, we require that no intermediate project is solved, even when the maximum number of hard projects is completed.
That is, a coalition of weight $1+ \lambda$ is blocked by a deviation to a singleton by one of the players, i.e. $\phi < \frac{1}{(1+\lambda)^d}$ or $\phi > 1 - \left( \frac{\lambda}{1 + \lambda} \right)^d$.
It follows that
 $\frac{(1+\lambda)^d}{2^d + (1+\lambda)^d} < \phi < \min\left(1 - \frac{1}{2^d}, \frac{1}{(1+ \lambda)^d}, \frac{2^d}{2^d + (1+\lambda)^d}\right)$.
\qed~\end{proof}

\begin{lemma}  \label{lem:two_layer_cross}
Consider an academic game with indivisible budgets and two types of players, of weights $1$ and $\lambda$, respectively, where $0 < \lambda < 1$.
Then every pairwise stable coalition structure has the maximum number of cross-layer collaborations when
$\max\left(\frac{1}{(1+\lambda)^d}, 1 - \frac{1}{2^d}, \frac{2^d}{2^d+ (1+\lambda)^d}\right) < \phi < 1 - \left( \frac{\lambda}{1+ \lambda} \right)^d$.
\end{lemma}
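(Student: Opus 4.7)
The plan is to mirror the case-analysis structure of Lemma~\ref{lem:two_player_same_layer}: I enumerate the configurations that would prevent ``only cross-layer collaborations'' and, for each, exhibit a blocking individual or pairwise deviation that is strictly improving under the stated range of $\phi$. In the two-type model, having the maximum number of cross-layer collaborations (that is, $\min(h,\ell)$, where $h$ and $\ell$ are the counts of heavy and light players) is equivalent to the absence of same-layer pairs together with the existence of a cross-layer pair whenever an unmatched heavy and an unmatched light player coexist.

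First I would rule out H-H pairs. In an H-H pair the second author earns $(1-\phi)\cdot 2^r$, so an individual deviation to a singleton is strictly improving whenever $(1-\phi)\cdot 2^r < 1$, i.e., $\phi > 1 - 1/2^r$. The identical inequality applied to the second author of an L-L pair gives $(1-\phi)(2\lambda)^r < \lambda^r$, again under $\phi > 1-1/2^r$; so the same threshold rules out L-L as well.

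Next I would handle the mixed configurations via pairwise deviations. If an H-H pair coexists with a singleton L, the pair $(H_2,L)$ can deviate to a cross-layer pair: $H_2$ moves from $(1-\phi)\cdot 2^r$ to $\phi(1+\lambda)^r$ (strict improvement iff $\phi > 2^r/(2^r+(1+\lambda)^r)$) and $L$ moves from $\lambda^r$ to $(1-\phi)(1+\lambda)^r$ (strict improvement iff $\phi < 1 - (\lambda/(1+\lambda))^r$). Symmetrically, if an L-L pair coexists with a singleton H, the deviation $(H, L_2)$ is blocking: $H$ moves from $1$ to $\phi(1+\lambda)^r$ (requiring $\phi > 1/(1+\lambda)^r$), and $L_2$ moves from $(1-\phi)(2\lambda)^r$ to $(1-\phi)(1+\lambda)^r$, which is automatic since $2\lambda < 1+\lambda$. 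Finally, if an unmatched $H$ and $L$ coexist as singletons, they form a cross-layer pair under the same two conditions $\phi > 1/(1+\lambda)^r$ and $\phi < 1 - (\lambda/(1+\lambda))^r$. Collecting all these sufficient bounds yields exactly $\max(1-1/2^r,\ 1/(1+\lambda)^r,\ 2^r/(2^r+(1+\lambda)^r)) < \phi < 1 - (\lambda/(1+\lambda))^r$.

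I would also verify that the cross-layer pairs produced in the target structure are themselves pairwise stable, i.e., that neither the $H$ nor the $L$ in an $H$-$L$ pair benefits from breaking to a singleton; this reduces to $\phi(1+\lambda)^r \geq 1$ and $(1-\phi)(1+\lambda)^r \geq \lambda^r$, which are exactly the two inequalities already imposed. The step I expect to be the main obstacle is bookkeeping across cases: confirming that every configuration violating ``only cross-layer collaborations'' matches at least one blocking deviation whose condition lies in the stated range, and that no further pairwise deviation (for instance a swap between two existing $H$-$L$ pairs, which is blocked exactly by the complementary range of Lemma~\ref{lem:two_player_same_layer}) imposes a bound not already listed.
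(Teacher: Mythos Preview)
Your proposal is correct and follows essentially the same case-analysis approach as the paper's proof. The paper enumerates four configurations (two unlike singletons; an $H$--$H$ pair with a light singleton; an $L$--$L$ pair with a heavy singleton; two same-layer pairs of different types) and derives the same inequalities you obtain, then appends the individual-deviation condition $\phi>1-1/2^r$ at the end to kill residual same-layer pairs. Your only organizational difference is that you invoke the individual deviation first, which lets you drop the paper's fourth case (two same-layer pairs) since it is already subsumed; and you add an explicit stability check for the target cross-layer structure, which the paper leaves implicit. Both routes yield the identical bound $\max\!\left(1-\tfrac{1}{2^r},\ \tfrac{1}{(1+\lambda)^r},\ \tfrac{2^r}{2^r+(1+\lambda)^r}\right)<\phi<1-\left(\tfrac{\lambda}{1+\lambda}\right)^r$.
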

\begin{proof}
For every pairwise stable coalition structure to solve the maximum number of cross-layer collaborations, it should be the case that
whenever a coalition structure contains:
\begin{itemize}
\item 2 singleton projects of different weights: then the two players can deviate to a pair. That is, $\phi \cdot (1 + \lambda)^d > 1$
and $(1-\phi)(1 + \lambda)^d > \lambda^d$, i.e. $\frac{1}{(1+\lambda)^d} < \phi < 1 - \left(\frac{\lambda}{1+\lambda}\right)^d$.
\item 1 same-layer project by two heavy players and 1 light singleton project: then the light player deviates with one of the two heavy players.
That is,
$\phi \cdot (1 + \lambda)^d > (1 - \phi) \cdot 2^d$
and
$(1-\phi)(1+\lambda)^d > \lambda^d$, and so $\frac{1}{1 + \left( \frac{1+\lambda}{2} \right)^d} < \phi < 1 - \left( \frac{\lambda}{1 + \lambda} \right)^d$.
\item 1 same-layer project by two light players and 1 heavy singleton project: then the heavy player deviates with one of the light players.
That is, $(1-\phi)\cdot (1 + \lambda)^d > (1-\phi)(2\lambda)^d$ and $\phi \cdot (1 + \lambda)^d > 1$, i.e. $\phi > \frac{1}{(1 + \lambda)^d}$.
\item 2 same-layer projects of different weights: Then one of the heavy players deviates with one of the light players.
That is, $\phi \cdot (1 + \lambda)^d > (1 - \phi) \cdot 2^d$ and
$(1-\phi)\cdot (2\lambda)^d < (1-\phi) \cdot (1 + \lambda)^d$, i.e. $\phi > \frac{1}{1 + \left( \frac{1+\lambda}{2}\right)^d}$.
\end{itemize}
In addition, we require that even when the maximum number of cross layer collaborations occurs, if there exists a same-layer project, it is blocked by a deviation of the second player, who prefers to work alone. That is, $\phi > 1 - \frac{1}{2^d}$, and so
$\max\left(\frac{1}{(1+\lambda)^d}, 1 - \frac{1}{2^d}, \frac{2^d}{2^d + (1+\lambda)^d}\right) < \phi < 1 - \left( \frac{\lambda}{1+ \lambda} \right)^d$.
\qed~\end{proof}

\begin{lemma}  \label{lem:two_layer_none}
Consider an academic game with indivisible budgets and two types of players, of weights $1$ and $\lambda$, respectively, where $0 < \lambda < 1$.
Then no pairwise stable coalition structure has collaborations when $1 - \frac{1}{2^d} < \phi < \frac{1}{(1+ \lambda)^d}$ or
$\phi > \max\left( 1 - \frac{1}{2^d}, 1 - \left( \frac{\lambda}{1 + \lambda}\right)^d\right)$.
\end{lemma}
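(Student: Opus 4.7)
The statement asserts that two disjoint parameter regimes for $\phi$ each preclude collaboration in any pairwise stable coalition structure. Since there are only two player types (weight $1$ and weight $\lambda$), any collaboration in the indivisible-budget setting must be one of exactly three kinds: heavy--heavy (joint weight $2$), light--light (joint weight $2\lambda$), or cross-layer heavy--light (joint weight $1+\lambda$). My plan is to argue contrapositively: assume some pairwise stable $CS$ contains at least one collaboration, exhibit the blocking deviation (always a unilateral drop to a singleton by one of the two partners), and read off the forbidden inequality on $\phi$. Doing this for each of the three collaboration types and each of the two regimes closes the proof.

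\textbf{Regime 1:} $1 - 2^{-r} < \phi < (1+\lambda)^{-r}$. If a heavy--heavy pair is present, the second author receives $(1-\phi)\cdot 2^r$, and since $\phi > 1-2^{-r}$ gives $(1-\phi)\cdot 2^r < 1$, that author strictly prefers a singleton; so the pair is not stable. The identical calculation rescaled by $\lambda^r$ kills any light--light pair, since $\phi > 1-2^{-r}$ is equivalent to $(1-\phi)(2\lambda)^r < \lambda^r$. For a cross-layer pair, the heavy player as first author receives $\phi(1+\lambda)^r$, and $\phi < (1+\lambda)^{-r}$ gives $\phi(1+\lambda)^r < 1$, so the heavy player strictly prefers the singleton. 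Hence no collaboration survives.

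\textbf{Regime 2:} $\phi > \max\bigl(1 - 2^{-r},\, 1 - (\lambda/(1+\lambda))^r\bigr)$. The condition $\phi > 1 - 2^{-r}$ again eliminates both same-layer collaborations by the argument above. For a cross-layer pair, the binding constraint is now on the light player (the second author), who receives $(1-\phi)(1+\lambda)^r$; the hypothesis $\phi > 1 - (\lambda/(1+\lambda))^r$ is exactly $(1-\phi)(1+\lambda)^r < \lambda^r$, so the light partner strictly prefers to work alone. Again no collaboration survives.

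\textbf{Main obstacle.} The proof itself is just a finite case analysis of three collaboration types against two parameter regimes, with each case requiring one inequality from the definition of pairwise stability; no subtle combinatorial argument is needed. The only care required is bookkeeping: for each collaboration type one must identify which of the two players has the tighter singleton-deviation constraint under the given $\phi$-range, so that the resulting inequality matches the stated bound rather than a weaker one. In particular, for the cross-layer case the ``pivotal'' partner switches from the heavy player (Regime 1) to the light player (Regime 2), which is why the hypothesis splits into two disjoint regions rather than a single interval.
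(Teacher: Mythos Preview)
Your proof is correct and follows essentially the same approach as the paper: both argue that under the stated $\phi$-ranges every possible collaboration (same-layer or cross-layer) is blocked by a unilateral deviation to a singleton, using exactly the inequalities $(1-\phi)\cdot 2^r<1$, $\phi(1+\lambda)^r<1$, and $(1-\phi)(1+\lambda)^r<\lambda^r$. Your write-up is simply more explicit in separating the heavy--heavy and light--light cases and in identifying which partner is the pivotal deviator in each regime, whereas the paper compresses these into two sentences.
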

\begin{proof}
There are two cases. If the coalition structure contains a same-layer project, then it is blocked by one of the players, who deviates to a singleton.
That is, $(1-\phi)\cdot 2^d < 1$. If the coalition structure contains a cross-layer project, then again one of the players deviates to a singleton.
That is,
either $\phi \cdot (1 + \lambda)^d < 1$ or $(1-\phi) \cdot (1 + \lambda)^d < \lambda^d$.
Thus every pairwise stable coalition structure contains only singleton projects when either $1 - \frac{1}{2^d} < \phi < \frac{1}{(1+ \lambda)^d}$ or
$\phi > \max\left( 1 - \frac{1}{2^d}, 1 - \left( \frac{\lambda}{1 + \lambda}\right)^d\right)$.
\qed~\end{proof}

\end{document}